\documentclass[12pt,a4paper,oneside]{article}
\usepackage[left=2.5cm, right=2.5cm, top=2cm, bottom=2.5cm]{geometry}
\usepackage{graphicx} 
\usepackage{natbib}
\usepackage{url}
\usepackage{booktabs}
\usepackage{datetime}
\usepackage{float}
\usepackage{amsmath}
\usepackage{amssymb}
\usepackage{amsthm}
\newtheorem{assumption}{Assumption}
\newtheorem{theorem}{Theorem}
\newtheorem{corollary}{Corollary}
\newtheorem{definition}{Definition}
\newtheorem{remark}{Remark}
\newcommand{\citeg}[1]{\citep[see, e.g.,][]{#1}}
\usepackage[colorlinks=true,
            citecolor=blue,
            linkcolor=blue,
            urlcolor=blue]{hyperref}
\author{
Marina Telezhkina
\thanks{
\textit{Corresponding author.}
Email: m.s.telezhkina@gmail.com
}
 \\
Pforzheim University 
\and
Dominik Wied \\
University of Cologne
}          

\date{\monthname[\month] \the\year}

\begin{document}
\title{Cosine metric as a structural economic similarity measure}
\maketitle

\begin{abstract}
Structural similarity, the extent to which economic mechanisms respond alike to common shocks, matters for economic forecasting, policy transfer, and other decisions that rely on evidence from comparable settings. For mechanisms approximated by linear models, this idea has a simple geometric representation: The smaller the angle between their coefficient vectors, the more similarly they respond to the same shock. This paper introduces the cosine of this angle as a measure of structural similarity and develops the corresponding inference procedure. Applied to New Jersey's wage mechanism around the 1992 minimum-wage reform, the approach yields interesting evidence: Although the reform clearly affected the low-wage part of the wage distribution, the underlying mechanism linking worker characteristics to wages was no different from what would have prevailed without it.
\\
\\
\medskip
\noindent\textbf{Keywords:} Minimum wage, model comparison,  parameter heterogeneity \\
\noindent\textbf{JEL codes:} C12, C13, J31, J38
\end{abstract}

\section{Introduction}\label{sec1}
Economic units are inherently heterogeneous. Beyond differences in observed outcomes, they differ in the mechanisms through which economic shocks and policy interventions are translated into economic responses. Quantifying this structural diversity is essential for understanding heterogeneous adjustment processes, assessing the transferability of policies across settings, and constructing credible counterfactual predictions. Yet the analytical tools commonly used to study structural diversity remain limited. Conventional distance measures, including measures of economic distance based on trade costs \citep{Fisheretal2015}, compress multidimensional differences into scalar summaries and may therefore obscure economically relevant features of the underlying relationships. Clustering methods \citep{BonhommeManresa2015} take the opposite approach, but require many cross-sectional units to reliably detect latent groups, making them ill-suited to comparisons among a small number of economies. As a result, much of the nuanced heterogeneity that shapes economic behaviour remains analytically invisible.

The current research takes a different perspective on structural similarity by comparing the underlying economic mechanisms; that is, how economies respond to common shocks. To this end, it constructs response coefficient vectors for each economy and introduces a cosine-based measure of structural similarity defined by the angle between these vectors.

Structural similarity measurement opens a range of analytical possibilities across several domains. In regional economic development, it allows researchers to identify similarities and divergences in structural profiles and responses to economic shocks. This perspective is particularly valuable for investment and export strategies: actors can target economies whose reaction patterns diverge, thereby reducing correlated risks and revealing strategic opportunities that conventional indicators overlook. The structural similarity approach also provides a principled entry point into the longstanding challenge of assessing the external validity of policy interventions. Because policy effects depend heavily on institutional context, demographic structure, and temporal conditions, transferring results across settings is fraught with uncertainty. The measure of structural similarity provides an early signal of whether policy outcomes observed in one context are likely to generalise to another. Together, these contributions position the structural similarity measurement as a powerful, theoretically grounded instrument for advancing comparative economic research, improving policy evaluation, and enhancing strategic decision-making.

The current research aims to introduce a theoretically grounded and empirically validated framework for measuring structural economic similarity using a cosine-based model-similarity metric. The research relates to strands of econometric literature that account for unit heterogeneity by model estimation. In general, this class of methodologies captures only a few distinct degrees of diversity: full uniqueness of economic units (e.g., models with individual effects), groups of similar units (e.g., models with cluster-type heterogeneity), or complete identity of economic units. This leaves a clearly identifiable gap in the literature: the need to move beyond discrete classifications and analyse structural diversity along a continuous spectrum.

Moreover, existing measures of proximity are typically used only as classification tools. While they help identify groups of similar economic units or quantify heterogeneity, little attention has been paid to the economic interpretation of the distance itself. Consequently, these measures are rarely employed to determine whether empirical evidence can be generalised across economies or whether policies are likely to remain effective in different structural environments. In methods designed to identify cluster-type heterogeneity, the cluster structure is inferred from distances between vectors of estimated coefficients in the parameter space. The k-means approach of \cite{BonhommeManresa2015} is based on minimising squared Euclidean distances between individual-specific coefficient estimates and group-specific centroids. The CARDS method of  \cite{Keetal2015} identifies groups through ordered differences (L1 or L2) in estimated coefficients. More generally, methods for identifying grouped heterogeneity typically rely on Manhattan or Euclidean notions of distance. While these measures allow groups to be ranked according to their relative proximity, they provide little substantive interpretation beyond the clustering exercise itself.

A related strand of the literature focuses on testing for heterogeneity. Rather than estimating group structures directly, these approaches attempt to formalise the choice between strict parameter homogeneity, grouped heterogeneity, and unrestricted heterogeneity across units. Importantly, they reduce the problem to a sequence of binary decisions regarding the presence or absence of specific forms of heterogeneity. Prominent examples include the traditional F-test and the slope homogeneity test of \cite{PesaranYamagata2008}, which compares individual slope estimates with pooled estimators. \cite{PattonWeller2023} extend this perspective to grouped heterogeneity by testing whether cluster centres obtained from a k-means procedure are statistically distinct. Despite their methodological differences, all three approaches rely on test statistics based on quadratic-form distances, which can be interpreted as squared Mahalanobis distances.

Taken together, existing econometric approaches treat distance primarily as a computational device rather than an object of economic interest. While it facilitates classification, clustering, or the detection of heterogeneity, the information contained in the distance itself remains largely unused. This leaves the broader question of how structural similarity can be quantified, interpreted, and incorporated into economic analysis largely unanswered.

This paper addresses the problem by introducing the cosine metric as a measure of structural economic similarity between response coefficient vectors and by developing the corresponding framework for its identification, estimation, and statistical inference. The remainder of the paper is organised as follows. Section 2 develops the theoretical foundations for using the cosine metric to quantify structural similarity between economic mechanisms. Section 3 introduces the estimator, establishes its statistical properties, and develops the corresponding inference procedures. Its finite-sample performance is then assessed in Section 4 through Monte Carlo simulation. Section 5 illustrates the methodology using the New Jersey minimum wage reform to examine whether the policy altered the underlying response structure of the labour market beyond its conventional treatment effects. Section 6 discusses the limitations of the proposed methodology and outlines several directions for future research. Section 7 concludes.  

\section{Economic justification of the use of cosine metric}\label{sec2}
Economists usually illustrate a country's reaction to a shock using a small number of key macroeconomic parameters, such as GDP growth, inflation, unemployment, which are also widely used for cross-country comparisons. Yet, long-term economic resilience is rooted in economic mechanisms that can hardly be captured by comparing aggregate outcomes alone. Motivated by \cite{Sims1980}, economic mechanisms can be described through the way economic variables respond to shocks. In this spirit, consider an economic mechanism well approximated by the linear model \begin{equation}
y=X\beta+\epsilon,
\label{eq:linear model}
\end{equation}  where the coefficient vector $\beta$ characterises the response of the outcome variable to changes in the explanatory variables. Let $\beta_i$ and $\beta_j$ denote coefficient vectors corresponding to two economies, regions, sectors, or other economic units. The vectors of coefficients are vectors of marginal responses to unit changes in factor variables to be named as structural response vectors. Two economic units may exhibit the same aggregate change following a particular shock, $\Delta X_i\beta_i=\Delta X_j\beta_j$,
while their coefficient vectors differ. The same observed change can therefore be generated by fundamentally different underlying mechanisms—a distinction that a comparison of aggregate outcomes alone may obscure.

While direct comparison of coefficient vectors in  \eqref{eq:linear model} across economies does not, in general, permit conclusions about the equivalence of underlying economic mechanisms, it does provide information about the similarity in the direction of responses to common shocks or policy interventions. To compare the response vectors geometrically, we work with their normalised counterparts $\tilde\beta_i = {\beta_i}/{\|\beta_i\|},  \tilde\beta_j = {\beta_j}/{\|\beta_j\|}$ and construct an orthonormal basis for the plane spanned by them: the first basis vector $e_1 = \tilde\beta_j$ is the direction of economy $j$'s response vector, and the second, $e_2 = \tilde\beta_j^{\perp}$, is the unit vector in that plane orthogonal to $\tilde\beta_j$. In this basis, the normalised response structure of economy $i$ can be written as
\begin{equation}
\tilde\beta_i = \tilde\beta_j \cos\phi + \tilde\beta_j^{\perp}\sin\phi,
\label{eq:decomposition}
\end{equation}
where $\phi$ is the angle between the two structural response vectors. Thus, $\cos\phi$ measures the component of economy $i$'s response structure aligned with that of economy $j$, while $\sin\phi$ captures the orthogonal component reflecting their structural difference. A value of $\phi=0^{\circ}$ indicates that the coefficient vectors are proportional with a positive proportionality factor. In this case the economies have similar response structures but differ in the magnitude of their response to the same shock. The two economies, therefore, exhibit the same relative response pattern across factors: one economy may respond more strongly or more weakly to every factor, but the relative importance of the factors in generating the response remains unchanged. For $0^\circ<\phi<90^\circ$, the two economies exhibit partially similar response structures. They share some common features in the way changes in the underlying factors translate into changes in the outcome variable, but the relative importance of individual factors differs across economies. As the angle increases, the overall relative importance of the factors in generating the outcome response becomes increasingly different across the two economies, and the economies may also differ in the direction of the responses associated with particular factors. At $\phi=90^{\circ}$, the two economies exhibit fundamentally different ways of translating changes in the underlying factors into changes in the outcome variable. Finally, $\phi=180^{\circ}$ indicates that the coefficient vectors are proportional with a negative proportionality factor. The response patterns of the two economies are therefore opposite: factors associated with positive responses in one economy are associated with negative responses in the other, up to a common scale factor.

To express $\phi$ in terms of $\beta_i$ and $\beta_j$, take the dot product of both sides of \eqref{eq:decomposition} with $\tilde\beta_j$:
\[
\tilde\beta_i \cdot \tilde\beta_j = \cos\phi \,(\tilde\beta_j \cdot \tilde\beta_j) + \sin\phi \,(\tilde\beta_j^{\perp}\cdot \tilde\beta_j).
\]
Since $e_1, e_2$ form an orthonormal basis, $\tilde\beta_j\cdot\tilde\beta_j = \|\tilde\beta_j\|^2 = 1$ and $\tilde\beta_j^{\perp}\cdot\tilde\beta_j = 0$, so that
\[
\tilde\beta_i \cdot \tilde\beta_j = \cos\phi.
\]
Substituting $\tilde\beta_i = \beta_i/\|\beta_i\|$ and $\tilde\beta_j = \beta_j/\|\beta_j\|$ then gives the cosine metric
\begin{equation}
\theta_{ij} = \frac{\beta_i \cdot \beta_j}{\|\beta_i\|\,\|\beta_j\|}.
\label{eq:cosine metric}
\end{equation}
Thus $\phi$, and hence $\theta_{ij}$, is fully determined by the original (unnormalised) structural response vectors; the decomposition \eqref{eq:decomposition}, together with the derivation above, justifies the use of the cosine metric  $\theta_{ij}$ as a measure of structural similarity between economic units.

Revisiting the object of comparison stated at the beginning of this section, structural similarity is meant to capture how economies respond to shocks in the underlying factors. In a linear model, such responses are governed by the slope coefficients, whereas the intercept captures the baseline level of the outcome. This distinction carries directly into how the cosine metric is constructed. Calculated without the intercept, $\theta_{ij}$ isolates similarity in the underlying factor-response structure; calculated with the intercept included, by contrast, $\theta_{ij}$ mixes similarity in this response structure with similarity in baseline outcome levels, since both components enter the same coefficient vector and cannot be separately recovered from a single cosine value. Conflating these two objects into a single indicator obscures more than it reveals, since differences in outcome levels and differences in the structure governing responses to factor changes are conceptually distinct dimensions of economic similarity that need not move together. We therefore exclude the constant from the coefficient vector used to construct the cosine similarity metric in \eqref{eq:estimator} throughout the paper, treating this as our primary operationalisation of structural similarity.

Finally, a high cosine similarity does not by itself establish that two mechanisms are equivalent. The estimated coefficient vector is a linear approximation to a far richer, typically nonlinear aggregation of individual decision-making processes, and the metric can only compare mechanisms through the lens of this approximation. While $\theta_{ij}$ provides a rigorous, scale-invariant measure of the directional similarity of estimated responses, interpreting it as evidence of similarity in the underlying economic mechanisms themselves requires the more precise economic analysis.

\section{Estimating the cosine similarity}\label{sec3}
Consider an economic mechanism approximated by the linear model without constant
\begin{equation}
y = X\beta + \epsilon,
\label{eq:linear_model}
\end{equation}
and let $\beta_i, \beta_j \in \mathbb{R}^k$ denote the coefficient vectors describing this mechanism in economies $i$ and $j$. The cosine metric measuring the directional similarity between these two structural response vectors is
\begin{equation}
\theta_{ij} = \theta(\beta_i,\beta_j) = \frac{\beta_i \cdot \beta_j}{\|\beta_i\|\,\|\beta_j\|}.
\label{eq:cosine}
\end{equation}

\begin{assumption}[Identification]
\label{as:identification}
The coefficient vectors $\beta_i$ and $\beta_j$ are point identified.
\end{assumption}

\begin{assumption}[Nondegeneracy]
\label{as:nondegenerate}
$\|\beta_i\| > 0$ and $\|\beta_j\| > 0$.
\end{assumption}

\begin{assumption}[Consistency]
\label{as:consistency}
$\hat\beta_i \xrightarrow{p} \beta_i$ and $\hat\beta_j \xrightarrow{p} \beta_j$ as $n \to \infty$.
\end{assumption}

\begin{assumption}[Joint asymptotic normality]
\label{as:normality}
Let $\gamma = (\beta_i', \beta_j')' \in \mathbb{R}^{2k}$ and $\hat\gamma = (\hat\beta_i', \hat\beta_j')'$. Then
\begin{equation}
\sqrt{n}(\hat\gamma - \gamma) \xrightarrow{d} N(0, \Sigma),
\label{eq:clt}
\end{equation}
where $\Sigma$ is the $2k \times 2k$ asymptotic variance-covariance matrix of $\hat\gamma$.
\end{assumption}

\begin{remark}
When $\hat\beta_i$ and $\hat\beta_j$ are estimated on independent samples (the typical case when $i$ and $j$ index distinct economies), $\Sigma$ is block-diagonal, $\Sigma = \mathrm{diag}(\Sigma_i, \Sigma_j)$, which simplifies the variance expressions below.
\end{remark}

\begin{theorem}[Identification of $\theta_{ij}$]
\label{th:identification}
Under Assumptions \ref{as:identification}--\ref{as:nondegenerate}, the cosine similarity $\theta_{ij}$ is point identified.
\end{theorem}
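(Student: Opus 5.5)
The plan is to show that $\theta_{ij}$ is a well-defined function of the identified parameters $(\beta_i,\beta_j)$. Identification of $\beta_i,\beta_j$ then transfers to $\theta_{ij}$ directly, with no further conditions on the data-generating process.

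First, recall what point identification means under Assumption \ref{as:identification}. The joint distribution of the observables determines a unique pair $(\beta_i,\beta_j)$. Equivalently, any two structures that generate the same distribution of observables share the same $\beta_i$ and $\beta_j$. So there are maps $P \mapsto \beta_i(P)$ and $P\mapsto \beta_j(P)$ from the distribution $P$ of observables to $\mathbb{R}^k$.

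Second, define $\theta:\{(a,b)\in\mathbb{R}^k\times\mathbb{R}^k:\|a\|>0,\|b\|>0\}\to[-1,1]$ by
\[
\theta(a,b)=\frac{a\cdot b}{\|a\|\,\|b\|},
\]
as in \eqref{eq:cosine}. The denominator is strictly positive on this domain, so $\theta$ is single-valued. By the Cauchy--Schwarz inequality its value lies in $[-1,1]$. Assumption \ref{as:nondegenerate} guarantees that the identified pair $(\beta_i,\beta_j)$ lies in this domain. The only genuine obstacle is here: without nondegeneracy the ratio is $0/0$, and the angle $\phi$ in \eqref{eq:decomposition} is undefined, since every direction is compatible with a zero vector. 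With Assumption \ref{as:nondegenerate} the issue disappears.

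Finally, compose the two maps. The map $P\mapsto\theta(\beta_i(P),\beta_j(P))$ is a well-defined function of the distribution of observables. Hence two observationally equivalent structures yield the same $\theta_{ij}$, which is the claim. It is worth remarking, though it is not required for the statement, that $\theta$ is continuous and indeed continuously differentiable on its domain. This is the property that will later feed the continuous mapping theorem and the delta method under Assumptions \ref{as:consistency}--\ref{as:normality}. One should also note that identification of $\theta_{ij}$ needs only the direction of each vector. It would therefore hold under the weaker condition that $\beta_i,\beta_j$ are identified up to positive scale, but I would not pursue this generalisation here.
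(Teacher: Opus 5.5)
Your proof is correct and follows essentially the same route as the paper: Assumption~\ref{as:identification} makes $\beta_i,\beta_j$ unique functions of the population distribution, and Assumption~\ref{as:nondegenerate} makes $\theta$ well defined and single-valued at that pair, so the composition pins down $\theta_{ij}$. Your extra remarks on the $[-1,1]$ range, on smoothness, and on identification holding up to positive rescaling are accurate but not needed for the statement.
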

\begin{proof}
By Assumption \ref{as:identification}, $\beta_i$ and $\beta_j$ are uniquely determined by the population distribution of $(y,X)$ in each economy. By Assumption \ref{as:nondegenerate}, $\theta(\cdot,\cdot)$ is a well-defined, single-valued function on $(\mathbb{R}^k\setminus\{0\})^2$. A deterministic, well-defined function of uniquely determined arguments is itself uniquely determined; hence $\theta_{ij}=\theta(\beta_i,\beta_j)$ is point identified.
\end{proof}

\begin{definition}[Estimator]
The cosine metric is estimated by its sample analogue,
\begin{equation}
\hat{\theta}_{ij} = \theta(\hat\beta_i,\hat\beta_j) = \frac{\hat{\beta}_i \cdot \hat{\beta}_j}{\|\hat{\beta}_i\|\,\|\hat{\beta}_j\|}.
\label{eq:estimator}
\end{equation}
\end{definition}

\begin{theorem}[Consistency]
\label{th:consistency}
Under Assumptions \ref{as:nondegenerate} and \ref{as:consistency}, $\hat\theta_{ij} \xrightarrow{p} \theta_{ij}$.
\end{theorem}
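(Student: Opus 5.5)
The plan is to apply the continuous mapping theorem. Write $\hat\theta_{ij}=\theta(\hat\beta_i,\hat\beta_j)$, where $\theta(a,b)=a\cdot b/(\|a\|\,\|b\|)$. Two things need to be shown. First, $\theta$ is continuous at the true point $(\beta_i,\beta_j)$. Second, $(\hat\beta_i,\hat\beta_j)$ converges in probability to that point jointly, not only componentwise.

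For the joint convergence, Assumption \ref{as:consistency} gives $\hat\beta_i\xrightarrow{p}\beta_i$ and $\hat\beta_j\xrightarrow{p}\beta_j$ separately. Marginal convergence in probability already implies joint convergence, which is not true for convergence in distribution. To see this, apply the union bound:
\[
P\bigl(\|(\hat\beta_i,\hat\beta_j)-(\beta_i,\beta_j)\|>\varepsilon\bigr)\le P\bigl(\|\hat\beta_i-\beta_i\|>\varepsilon/\sqrt2\bigr)+P\bigl(\|\hat\beta_j-\beta_j\|>\varepsilon/\sqrt2\bigr)\to0 .
\]
So $\hat\gamma\xrightarrow{p}\gamma$ in $\mathbb{R}^{2k}$, and no independence or joint normality is required.

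For continuity, the numerator $a\cdot b$ and the norms $\|a\|$, $\|b\|$ are continuous on $\mathbb{R}^k\times\mathbb{R}^k$. Their quotient is therefore continuous on the open set $U=(\mathbb{R}^k\setminus\{0\})^2$. By Assumption \ref{as:nondegenerate}, the true point $(\beta_i,\beta_j)$ lies in $U$.

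The main obstacle is that $\theta$ is undefined when $\hat\beta_i=0$ or $\hat\beta_j=0$, so $\hat\theta_{ij}$ may be undefined with positive probability in finite samples. I would handle this by fixing an arbitrary convention, say $\hat\theta_{ij}=0$ on that event. The event has vanishing probability: with $\delta=\min(\|\beta_i\|,\|\beta_j\|)>0$,
\[
P(\hat\beta_i=0 \text{ or } \hat\beta_j=0)\le P(\|\hat\beta_i-\beta_i\|\ge\delta)+P(\|\hat\beta_j-\beta_j\|\ge\delta)\to0 .
\]
Since $\theta$ is continuous at $\gamma$, for each $\varepsilon>0$ there is $\eta\in(0,\delta)$ such that $\|\hat\gamma-\gamma\|<\eta$ implies $\hat\gamma\in U$ and $|\theta(\hat\gamma)-\theta_{ij}|<\varepsilon$. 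Hence
\[
P(|\hat\theta_{ij}-\theta_{ij}|\ge\varepsilon)\le P(\|\hat\gamma-\gamma\|\ge\eta)\to0 ,
\]
which is the claimed consistency. The convention on the degenerate event does not affect the limit. Alternatively, one can cite the version of the continuous mapping theorem that only requires continuity at the limit point.
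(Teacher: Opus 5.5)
Your proof is correct and takes the same route as the paper's: Assumption~\ref{as:nondegenerate} gives continuity of $\theta(\cdot,\cdot)$ at $(\beta_i,\beta_j)$, and the continuous mapping theorem is then applied to the consistent estimators of Assumption~\ref{as:consistency}. Your extra steps make explicit what the paper's short proof leaves implicit, namely that marginal convergence in probability implies joint convergence, and that the event $\hat\beta_i=0$ or $\hat\beta_j=0$ has vanishing probability.
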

\begin{proof}
By Assumption \ref{as:nondegenerate}, $\theta(\cdot,\cdot)$ is continuous at $(\beta_i,\beta_j)$. By Assumption \ref{as:consistency}, $\hat\beta_i\xrightarrow{p}\beta_i$ and $\hat\beta_j\xrightarrow{p}\beta_j$. The continuous mapping theorem then gives $\hat\theta_{ij}=\theta(\hat\beta_i,\hat\beta_j)\xrightarrow{p}\theta(\beta_i,\beta_j)=\theta_{ij}$.
\end{proof}

\begin{theorem}[Asymptotic normality]
\label{th:asym_normal}
Under Assumptions \ref{as:nondegenerate} and \ref{as:normality},
\begin{equation}
\sqrt{n}\left(\hat\theta_{ij} - \theta_{ij}\right) \xrightarrow{d} N\left(0,\ V_{\hat\theta}\right), \qquad V_{\hat\theta} = \nabla\theta' \, \Sigma \, \nabla\theta,
\label{eq:delta}
\end{equation}
where $\nabla\theta = \left(\partial\theta/\partial\beta_i',\ \partial\theta/\partial\beta_j'\right)' \in \mathbb{R}^{2k}$ is evaluated at $(\beta_i,\beta_j)$.
\end{theorem}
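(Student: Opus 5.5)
The plan is to apply the delta method to the map $\theta:(\mathbb{R}^k\setminus\{0\})^2\to\mathbb{R}$, $\theta(b_i,b_j)=b_i\cdot b_j/(\|b_i\|\|b_j\|)$, composed with the joint limit in Assumption \ref{as:normality}.

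\textbf{Step 1: differentiability of $\theta$.} By Assumption \ref{as:nondegenerate}, $(\beta_i,\beta_j)$ lies in the open set $U=(\mathbb{R}^k\setminus\{0\})^2$. On $U$, the numerator $b_i\cdot b_j$ is a polynomial and the norms are $C^\infty$ and strictly positive. Hence $\theta$ is continuously differentiable on a neighbourhood of $\gamma$, and in particular Fréchet differentiable at $\gamma$.

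\textbf{Step 2: the gradient.} Direct differentiation gives
\[
\frac{\partial\theta}{\partial b_i}=\frac{b_j}{\|b_i\|\|b_j\|}-\theta\,\frac{b_i}{\|b_i\|^2},\qquad
\frac{\partial\theta}{\partial b_j}=\frac{b_i}{\|b_i\|\|b_j\|}-\theta\,\frac{b_j}{\|b_j\|^2}.
\]
These are evaluated at $(\beta_i,\beta_j)$ to form $\nabla\theta$. The explicit form is not needed for the limit itself, but it is what makes $V_{\hat\theta}$ computable.

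\textbf{Step 3: first-order expansion.} Assumption \ref{as:normality} implies $\hat\gamma-\gamma=O_p(n^{-1/2})$, so $\hat\gamma\xrightarrow{p}\gamma$. With probability tending to one, $\hat\gamma\in U$, so $\hat\theta_{ij}$ is well defined there; its value off that event does not affect the limit. A mean-value expansion then gives
\[
\sqrt{n}(\hat\theta_{ij}-\theta_{ij})=\nabla\theta(\bar\gamma)'\sqrt{n}(\hat\gamma-\gamma),
\]
where $\bar\gamma$ lies between $\hat\gamma$ and $\gamma$. Continuity of $\nabla\theta$ at $\gamma$ and the continuous mapping theorem give $\nabla\theta(\bar\gamma)\xrightarrow{p}\nabla\theta(\gamma)$. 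Slutsky's lemma combined with \eqref{eq:clt} then yields convergence to $\nabla\theta'Z$ with $Z\sim N(0,\Sigma)$, which is distributed as $N(0,\nabla\theta'\Sigma\nabla\theta)$.

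\textbf{Main obstacle.} The limiting variance may be zero. Note that $\nabla\theta'\gamma=0$, since $\theta$ is homogeneous of degree zero in each argument. More importantly, if $\theta_{ij}=\pm1$, that is, if $\beta_i$ and $\beta_j$ are collinear, then $\nabla\theta=0$: substituting $\beta_j=c\beta_i$ with $\theta=\operatorname{sign}(c)$ makes both components vanish. In that case the statement still holds, but only as the degenerate limit $N(0,0)$. I will note this boundary case explicitly, since inference at $\theta_{ij}=\pm1$ then requires a second-order expansion, and check that nothing in the argument excludes it.
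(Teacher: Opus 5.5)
Your proposal is correct and follows the paper's route: the paper notes that Assumption~\ref{as:nondegenerate} makes $\theta$ continuously differentiable at $\gamma$ and applies the multivariate delta method to \eqref{eq:clt}, and you simply prove that delta method via a mean-value expansion plus Slutsky. Your remark that $\nabla\theta=0$ when $\theta_{ij}=\pm1$, so the limit degenerates to $N(0,0)$, is accurate; one small tightening is that $U$ is not convex, so the mean-value step needs $\hat\gamma$ to lie in a ball around $\gamma$ contained in $U$ (keeping the whole segment in $U$), which also holds with probability tending to one.
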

\begin{proof}
By Assumption \ref{as:nondegenerate}, $\theta(\gamma)$ is continuously differentiable at $\gamma=(\beta_i',\beta_j')'$. Applying the multivariate delta method to the joint CLT in Assumption \ref{as:normality} (equation \eqref{eq:clt}) yields \eqref{eq:delta}.
\end{proof}

\begin{corollary}[Closed-form gradient]
\label{cor:gradient}
Writing $\tilde\beta_i = \beta_i/\|\beta_i\|$ and $\tilde\beta_j = \beta_j/\|\beta_j\|$,
\begin{equation}
\frac{\partial \theta}{\partial \beta_i} = \frac{1}{\|\beta_i\|}\left(\tilde\beta_j - \theta_{ij}\,\tilde\beta_i\right), \qquad
\frac{\partial \theta}{\partial \beta_j} = \frac{1}{\|\beta_j\|}\left(\tilde\beta_i - \theta_{ij}\,\tilde\beta_j\right).
\label{eq:gradient}
\end{equation}
\end{corollary}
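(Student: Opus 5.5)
The plan is a direct differentiation of the cosine metric \eqref{eq:cosine}, viewed as a function of $\beta_i$ with $\beta_j$ held fixed. I will write $\theta(\beta_i,\beta_j) = N(\beta_i,\beta_j)\,/\,D(\beta_i,\beta_j)$, with numerator $N = \beta_i\cdot\beta_j$ and denominator $D = \|\beta_i\|\,\|\beta_j\|$. By Assumption \ref{as:nondegenerate}, $D>0$ in a neighbourhood of $(\beta_i,\beta_j)$, so $\theta$ is differentiable there and the quotient rule applies.

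The two elementary gradients I need are $\partial N/\partial\beta_i = \beta_j$ and $\partial \|\beta_i\|/\partial\beta_i = \beta_i/\|\beta_i\| = \tilde\beta_i$. The second follows by differentiating $\|\beta_i\| = (\beta_i\cdot\beta_i)^{1/2}$, which is valid because $\beta_i\neq 0$. The quotient rule then gives
\[
\frac{\partial\theta}{\partial\beta_i} = \frac{\beta_j}{\|\beta_i\|\,\|\beta_j\|} - \frac{(\beta_i\cdot\beta_j)\,\|\beta_j\|\,\tilde\beta_i}{\|\beta_i\|^2\|\beta_j\|^2}.
\]

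Next I factor out $1/\|\beta_i\|$. The first term becomes $\tilde\beta_j/\|\beta_i\|$. In the second term, the factor $(\beta_i\cdot\beta_j)/(\|\beta_i\|\,\|\beta_j\|)$ is exactly $\theta_{ij}$, so that term equals $\theta_{ij}\tilde\beta_i/\|\beta_i\|$. Together these give the first formula in \eqref{eq:gradient}.

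The formula for $\partial\theta/\partial\beta_j$ then follows by symmetry, since $\theta(\beta_i,\beta_j)=\theta(\beta_j,\beta_i)$: swapping the indices in the previous result yields the second formula. Stacking the two gradients gives $\nabla\theta$ as used in Theorem \ref{th:asym_normal}.

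There is no real obstacle in this argument. The only point needing care is the bookkeeping of norms in the second term, so that $\theta_{ij}$ is cleanly recognised. As an optional sanity check, one can note that $\tilde\beta_i\cdot\partial\theta/\partial\beta_i = 0$, which is consistent with the scale invariance of $\theta$ in $\beta_i$.
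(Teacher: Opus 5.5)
Your proposal is correct and follows essentially the same route as the paper's derivation in Appendix~\ref{app:cosine_inference}, which differentiates the cosine directly to obtain $\beta_2/(ab) - g\,\beta_1/a^2$ and rewrites it as $\frac{1}{a}(u_2 - g\,u_1)$; this is exactly your quotient-rule computation followed by factoring out $1/\|\beta_i\|$. The only cosmetic difference is that you get the $\beta_j$-gradient from the symmetry $\theta(\beta_i,\beta_j)=\theta(\beta_j,\beta_i)$, whereas the paper writes out the analogous computation explicitly.
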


\begin{corollary}[Feasible inference]
\label{cor:feasible}
Let $\hat\Sigma$ be a consistent estimator of $\Sigma$ and $\hat\nabla\theta$ the gradient in \eqref{eq:gradient} evaluated at $(\hat\beta_i,\hat\beta_j)$. Then $\hat V_{\hat\theta} = \hat\nabla\theta'\hat\Sigma\hat\nabla\theta \xrightarrow{p} V_{\hat\theta}$, and by Slutsky's theorem,
\begin{equation}
\frac{\hat\theta_{ij}-\theta_{ij}}{\sqrt{\hat V_{\hat\theta}/n}} \xrightarrow{d} N(0,1),
\end{equation}
yielding the asymptotically valid $(1-\alpha)$ confidence interval $\hat\theta_{ij} \pm z_{1-\alpha/2}\sqrt{\hat V_{\hat\theta}/n}$.
\end{corollary}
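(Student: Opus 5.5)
The plan is to reduce Corollary~\ref{cor:feasible} to three facts: continuity of the gradient map, the continuous mapping theorem, and Slutsky's theorem, all combined with Theorem~\ref{th:asym_normal}. Throughout I will also require $V_{\hat\theta}>0$; this is needed for the standardisation to make sense, and I return to it below.

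First, I show that $\hat\nabla\theta \xrightarrow{p} \nabla\theta$. The gradient in \eqref{eq:gradient} is a composition of $\beta\mapsto\|\beta\|$, division, inner products and linear combinations. Under Assumption~\ref{as:nondegenerate} it is therefore a continuous function of $\gamma$ on an open neighbourhood of $(\beta_i',\beta_j')'$, because the norms stay bounded away from zero there. Assumption~\ref{as:normality} gives $\hat\gamma-\gamma=O_p(n^{-1/2})$, so $\hat\gamma\xrightarrow{p}\gamma$; Assumption~\ref{as:consistency} gives this directly as well. The continuous mapping theorem then yields $\hat\nabla\theta\xrightarrow{p}\nabla\theta$. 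One small point: with probability tending to one, $\hat\beta_i$ and $\hat\beta_j$ are nonzero, so $\hat\theta_{ij}$ and $\hat\nabla\theta$ are well defined on an event whose probability tends to one. That suffices for statements in probability.

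Second, the map $(g,S)\mapsto g'Sg$ is continuous, being a polynomial in the entries. Since $\hat\Sigma\xrightarrow{p}\Sigma$ by hypothesis, applying the continuous mapping theorem jointly to $(\hat\nabla\theta,\hat\Sigma)$ gives $\hat V_{\hat\theta}\xrightarrow{p}V_{\hat\theta}$. This is the first claim of the corollary.

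Third, I write
\[
\frac{\hat\theta_{ij}-\theta_{ij}}{\sqrt{\hat V_{\hat\theta}/n}}=\frac{\sqrt{n}(\hat\theta_{ij}-\theta_{ij})}{\sqrt{V_{\hat\theta}}}\cdot\sqrt{\frac{V_{\hat\theta}}{\hat V_{\hat\theta}}}.
\]
By Theorem~\ref{th:asym_normal}, the first factor converges in distribution to $N(0,1)$. By the second step and continuity of $x\mapsto\sqrt{V_{\hat\theta}/x}$ at $x=V_{\hat\theta}>0$, the second factor converges in probability to $1$. Slutsky's theorem then gives the standard normal limit. The confidence interval follows by inverting the pivot:
\[
P\bigl(|T_n|\le z_{1-\alpha/2}\bigr)\to 1-\alpha,
\]
and the event $|T_n|\le z_{1-\alpha/2}$ is exactly the event that $\theta_{ij}$ lies in $\hat\theta_{ij}\pm z_{1-\alpha/2}\sqrt{\hat V_{\hat\theta}/n}$.

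The main obstacle is the positivity of $V_{\hat\theta}$, which the stated assumptions do not guarantee. In particular, if $\theta_{ij}=\pm1$ then $\tilde\beta_j=\pm\tilde\beta_i$, so both blocks of \eqref{eq:gradient} vanish and $V_{\hat\theta}=0$. I would handle this by adding $\nabla\theta'\Sigma\nabla\theta>0$ as an explicit condition, which holds, for example, when $\Sigma$ is positive definite and $|\theta_{ij}|<1$. I would also remark that at the boundary $|\theta_{ij}|=1$ the delta method degenerates, and a second-order expansion with a non-normal limit would be needed.
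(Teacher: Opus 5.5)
Your argument is correct and is the same one the paper sketches inside the corollary itself: consistency of $\hat\nabla\theta$ and $\hat V_{\hat\theta}$ by continuous mapping, then Slutsky and inversion of the pivot, with the details the paper omits filled in. Your extra condition $V_{\hat\theta}=\nabla\theta'\Sigma\nabla\theta>0$ is a genuine requirement that the corollary's statement leaves out, since at $|\theta_{ij}|=1$ the gradient in \eqref{eq:gradient} vanishes; the paper acknowledges this boundary degeneracy only informally, in the simulation and application sections.
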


\section{Simulation study}

We next examine the finite-sample performance of the cosine similarity estimator and the associated variance estimator derived in the previous section. Specifically, the simulations address two questions: (i) how accurately does the estimator in \eqref{eq:estimator} recover the true cosine similarity between two economic mechanisms, and (ii) how accurately does the variance estimator in Corollary~\ref{cor:feasible} approximate the sampling variance of the cosine estimator?

To parallel the empirical application, we formulate the baseline DGP as a three-covariate linear mechanism, analogous to the Mincer earnings equation considered in the empirical section:
\begin{equation*}
Y_{it}
=
\beta_{1,it}X_{1,it}
+\beta_{2,it}X_{2,it}
+\beta_{3,it}X_{3,it}
+\varepsilon_{it},
\end{equation*}
where
\[
X_{1,it}, X_{2,it}, X_{3,it}
\overset{\mathrm{i.i.d.}}{\sim}
\mathcal{N}(0,1),
\]
and $\varepsilon_{it}$ is an independently distributed disturbance. We consider two economic units governed by comparable economic mechanisms of this form; their coefficient vectors capture the responses to unit changes in a common set of covariates, and the estimator in \ref{eq:estimator} measures the similarity in the direction of the two responses. 

Since economic mechanisms are estimated from populations of very different sizes (from a handful of narrowly defined economic units to broad aggregates spanning many agents, as is typical of individual-level applications) we examine the estimator's performance across sample sizes $n \in \{50, 200, 500, 1000\}$. We consider true cosine similarities of $\theta_0 \in \{0, 0.5, 0.9, 0.99\}$, representing substantially different degrees of similarity between the underlying mechanisms. Each design is evaluated using $N_{\mathrm{MC}} = 10{,}000$ Monte Carlo replications.

\begin{figure}[!h]
\begin{center}
\includegraphics[width=0.8\textwidth]{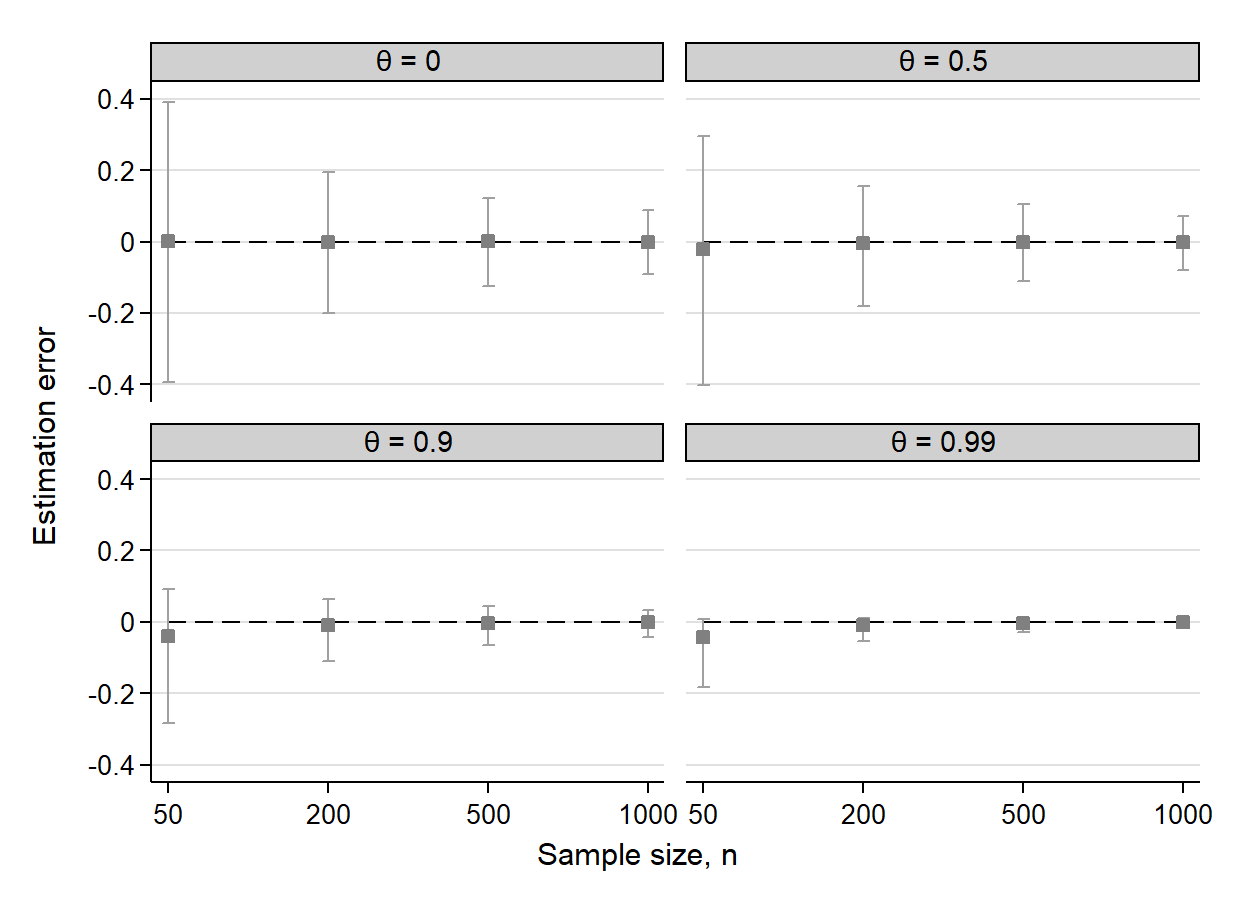}
\caption{Finite-sample performance of the cosine estimator}
\label{fig:MC}
\end{center}
\footnotesize{\textit{Note:} The figure reports the mean estimation error, $\hat{\theta}-\theta$, across Monte Carlo replications, with the endpoints of the vertical lines marking the 2.5th and 97.5th percentiles of the distribution of the estimation error.}
\end{figure}

The results shown in Figure~\ref{fig:MC} support the consistency of the cosine estimator established in Theorem~\ref{th:consistency}: across all values of the true cosine similarity, the mean estimation error approaches zero and the dispersion of the estimator decreases as the sample size increases. For $\theta=0$ and $\theta=0.5$, the estimator is nearly unbiased even in relatively small samples. At higher levels of $\theta$, a more pronounced finite-sample downward bias appears, which declines rapidly with the sample size. The delta-method variance estimator in Corollary~\ref{cor:feasible} performs comparably well: across the interior designs, the average standard error based on the variance estimator closely matches the empirical Monte Carlo standard deviation (see Appendix Table~\ref{tab:mc_results}).

A distinct pattern emerges when the true coefficient vectors are nearly collinear ($\theta=0.99$ in the simulations). In this near-boundary case, point estimation improves rapidly with the sample size: both the bias and
the dispersion of the cosine estimator become very small already at $n=500$. This improvement in point estimation, however, is not accompanied by a corresponding improvement in inference. The coverage probability of the nominal 95\% confidence interval is conservative in the smallest
sample and falls somewhat below its nominal level in the larger samples (see Appendix Table~\ref{tab:mc_results}). This pattern differs from the interior designs, where coverage remains close to the nominal level as estimation uncertainty decreases. The near-boundary behaviour is consistent with the fact that cosine similarity is bounded above by one and becomes locally insensitive to perturbations of the coefficient vectors as their directions approach collinearity. Consequently, the first-order delta-method approximation is less accurate when the population cosine similarity is close to its boundary. In applications where formal testing of exact similarity is required, rather than directly testing the boundary hypothesis $H_0:\theta=1$, where the first-order delta-method approximation becomes locally
degenerate, it may be preferable to test the proportionality of the underlying response vectors using a Wald-type test. This formulation characterises the same notion of perfect directional similarity while avoiding inference on the cosine metric directly at its boundary.

Importantly, however, exact equality need not be the economically relevant question. A cosine similarity approaching unity may itself provide strong evidence that the underlying mechanisms are closely aligned. In such cases, the relevant question shifts from whether the mechanisms are statistically indistinguishable to whether the remaining differences are economically meaningful. This calls for a substantive assessment of those differences
in the context of the decision at hand, such as policy transfer, benchmarking, or the choice of an appropriate comparison unit.

\section{Empirical Analysis: The New Jersey Minimum Wage Reform}

The 1992 increase in New Jersey's minimum wage, from \$4.25 to \$5.05 per hour, remains one of the most widely studied episodes in labour economics. Since \citet{CardKrueger1994} found no evidence of an employment decline in their comparison of fast-food restaurants in New Jersey
and neighbouring Pennsylvania, a large body of subsequent research has examined the resulting changes in employment and wages using different data and identification strategies \citeg{Dubeetal2010,Cengizetal2019, Wied2026}.

Yet this literature leaves a different question largely unexplored: did the reform reshape the underlying structure of the labour market, not merely its outcomes? Conventional treatment-effect analysis asks whether an intervention moves wages or employment. It says less about whether the relationships generating those outcomes change. A policy may therefore leave conventional outcome measures largely unchanged while still altering how worker characteristics relate to labour-market outcomes, through several plausible channels. A higher wage floor, for instance, may change firms' hiring and retention decisions, increasing the relative importance of observable productivity signals (education, labour-market experience) when firms select among workers whose wages are now compressed by the new minimum. The reform can also send spillovers further up the wage distribution: firms adjusting pay differentials to preserve internal hierarchies would shift the returns to worker characteristics even for employees who initially earned above the statutory minimum. Employment composition might shift too, as lower-productivity workers exit and firms substitute toward more experienced or more educated hires. Higher relative wages in New Jersey could, in turn, affect labour supply (participation, job search, and possibly cross-border commuting or migration), reshaping the pool of workers competing for jobs. Together, these channels suggest that the reform could reshape how wages relate to worker characteristics, independent of any shift in their average level. This is a type of change conventional treatment-effect estimates are not designed to detect. Structural similarity analysis offers a complementary lens, asking not whether the policy moved outcomes, but whether it left the wage-setting mechanism itself intact.

\subsection{New Jersey's Wage Mechanism Around the Reform}

Consider a wage mechanism described by the Mincer equation \citep{Mincer1974}. Grounded in human capital theory \citep{Becker1964} and the life-cycle model of human capital accumulation, the Mincer framework captures essential features of individuals' schooling and labour-market decisions and relates them to the evolution of earnings over the working life. Although there is evidence that some of the key assumptions underlying the conventional Mincer specification—in particular, the separability of schooling and labour-market experience, and the log-linearity of earnings in schooling—have become less consistent with observed wage patterns since the 1980s \citeg{Heckmanetal2003,Lemieux2006}, the Mincer equation remains a widely used and economically interpretable benchmark for characterising the relationship between wages, schooling, and labour-market experience \citep{Lemieux2006}.

Using the Mincer equation to characterise the wage mechanism, we ask two related questions: did New Jersey's wage mechanism change following the April 1992 minimum-wage increase, and did its similarity to the wage mechanisms of other states change as a result?

We examine these questions using the CPS Basic Monthly files from January 1989 through December 1993. The sample is restricted to workers with valid information on hourly wages, education, and labour-market experience. The estimation sample covers 51 states; the median state contributes approximately 53,000 observations before the reform and 28,000 after the reform. Following the Mincer specification, the outcome is the log of nominal hourly wages, and the wage mechanism is characterised by years of education, potential labour-market experience, and experience squared. Potential experience is defined as age minus years of education minus six. To place the coefficients on a common scale and facilitate comparisons across states and periods, the covariates are standardised using the pooled sample. Education and potential experience are transformed to have zero mean and unit variance, and the quadratic term is constructed as the square of standardised experience. A constant is included in every specification, but it is excluded from the cosine-metric calculation.

We estimate the Mincer equation by GMM and allow the covariance structure to accommodate two sources of dependence inherent in the data. The rotating-panel structure of the CPS means that observations on the same individual may be correlated across survey months; a state-month cell contains, on average, approximately 3,000 distinct individuals. At the same time, common macroeconomic shocks may generate contemporaneous dependence across states within a given month. We therefore construct the variance–covariance matrix of the coefficient estimates allowing for dependence along both the individual and year–month dimensions. The resulting coefficient estimates are substituted into \eqref{eq:estimator} to obtain the corresponding cosine similarities between the estimated coefficient vectors.

We first use these similarities to assess how stable state wage mechanisms were around the reform. For each state, we compare the coefficient vector estimated before April 1992 with its post-reform counterpart. A cosine similarity close to one indicates little change in the relative configuration of the estimated coefficients, while lower values indicate a larger change in the wage mechanism. New Jersey's similarity alone, however, does not reveal whether its change was unusual. We therefore locate New Jersey's pre--post similarity within the cross-state distribution of the same measure, using changes in other states over the same period as a benchmark.

\begin{figure}[!h]
\centering
\includegraphics[width=0.8\textwidth]{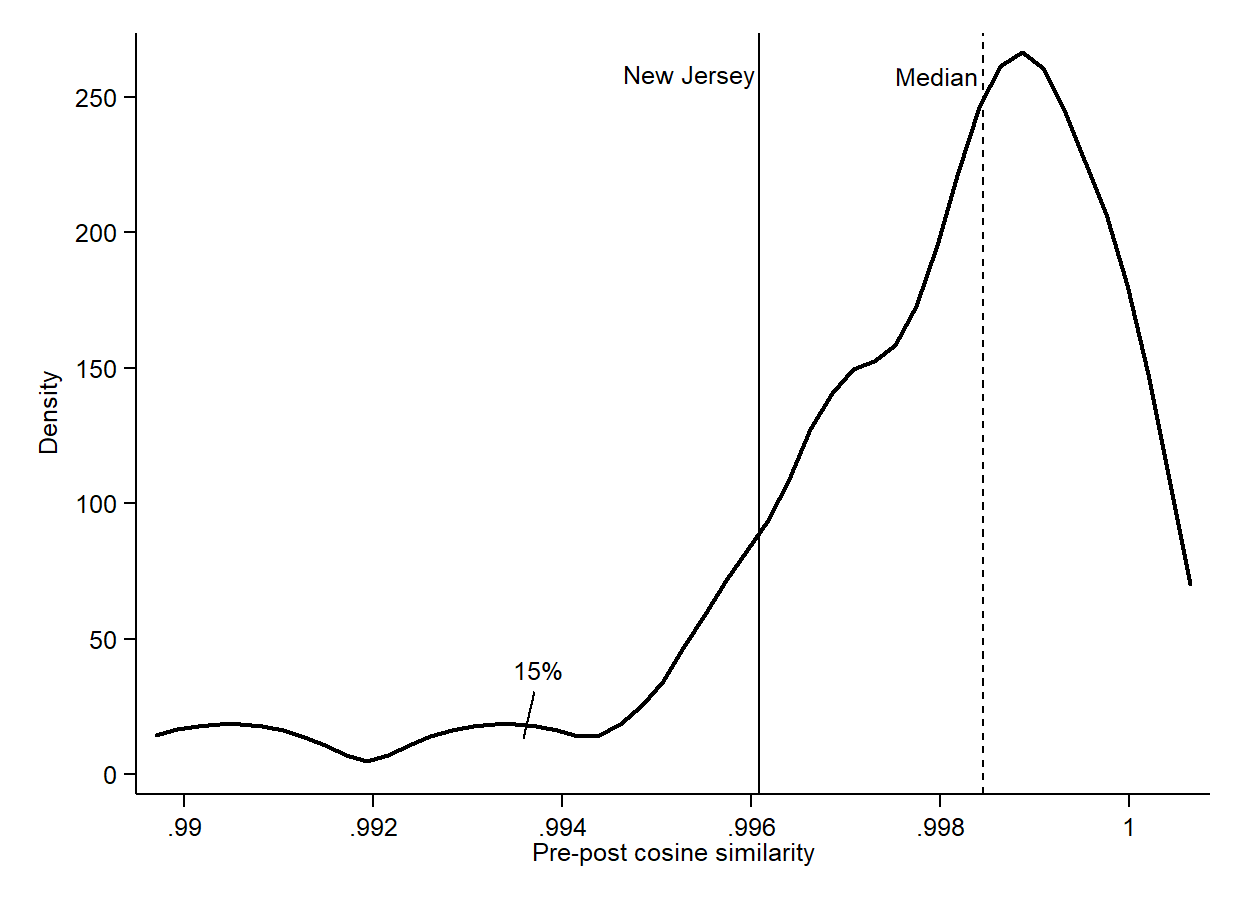}
\caption{Distribution of within-state wage-mechanism similarity around the New Jersey reform}
\label{fig:Distribution}
\end{figure}

Figure \ref{fig:Distribution} reveals a striking concentration of within-state similarities near one. For most states, the estimated relationship between wages, education, and experience therefore changed relatively little over the period surrounding the New Jersey reform. New Jersey departs from this general pattern: its pre--post similarity lies toward the lower tail of the distribution and well below the cross-state median. Put differently, the reconfiguration of New Jersey's wage mechanism was larger than that observed in most other states over the same period. This conclusion is reinforced by a joint Wald test, which rejects equality of New Jersey's pre- and post-reform coefficient vectors (p=0.009). Thus, both the cosine-based comparison and a conventional test of coefficient stability point to a statistically detectable change in New Jersey's wage mechanism. This comparison does not, on its own, identify the minimum-wage increase as the source of the change. It does show, however, that New Jersey experienced an unusually pronounced reconfiguration relative to the contemporaneous changes observed across other states.

A large within-state change, however, does not by itself reveal whether New Jersey grew more or less similar to other states: if other states shifted in the same direction, their similarity to New Jersey could remain essentially unchanged, while even a modest change in New Jersey alone could reshape its relative standing with particular states. This distinction motivates the comparison in Figure \ref{fig:Scatter}, which plots each state's cosine similarity with New Jersey after the reform against its similarity before the reform. The 45-degree line represents unchanged relative similarity: observations above it indicate convergence toward New Jersey's wage mechanism, while observations below it indicate divergence.

\begin{figure}[!h]
\begin{center}
\includegraphics[width=0.9\textwidth]{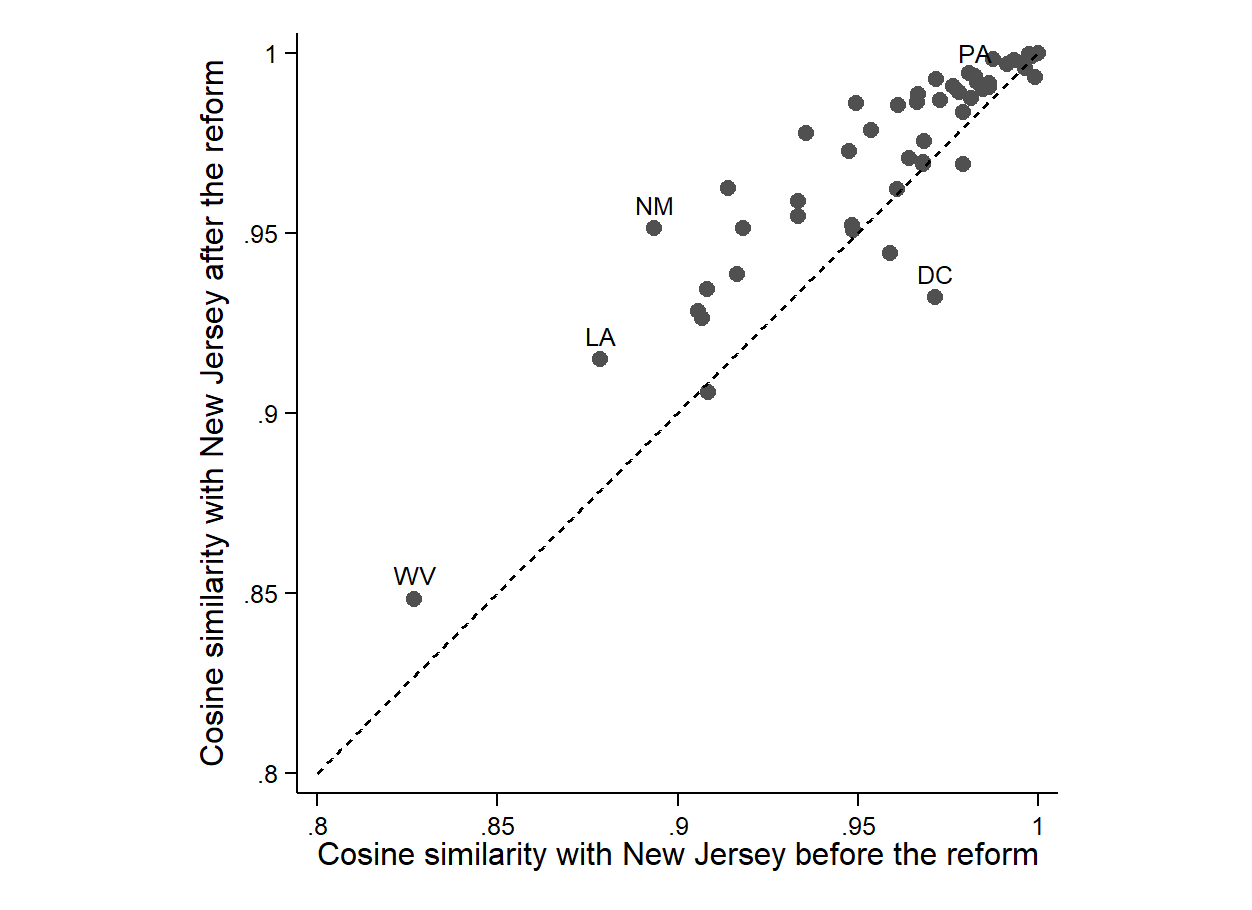}
\caption{Similarity of state wage mechanisms to New Jersey before and after the reform} 
\label{fig:Scatter}
\end{center}
\footnotesize{\textit{Note:} Selected labels highlight states with particularly large changes in similarity, together with Pennsylvania as a neighboring benchmark: WV = West Virginia; LA = Louisiana; NM = New Mexico; DC = District of Columbia; PA = Pennsylvania}
\end{figure}

The pattern that emerges is strikingly consistent: most states became closer to New Jersey after the reform. West Virginia and Louisiana, despite comparatively low pre-reform similarity, both converge toward New Jersey, while New Mexico shows the largest movement in the sample, rising from roughly 0.90 before the reform to approximately 0.95 afterward. States that were already highly similar to New Jersey before the reform, with cosine similarity near unity, also move upward, if only modestly, clustering just above the 45-degree line. Against this near-uniform convergence, the District of Columbia stands out as the one clear exception, falling from a pre-reform similarity of 0.97 to 0.93 — the only state to become measurably less similar to New Jersey after the reform. The reform, in other words, coincided with a broad-based convergence of state wage mechanisms toward New Jersey's, largely independent of each state's starting position, with the District of Columbia as a singular, conspicuous departure from that pattern.

Figure \ref{fig:Scatter} describes how New Jersey's relative position changed across the full set of states, but these movements must also be viewed in light of their estimation uncertainty. Figure \ref{fig:CIs} therefore reports the change in cosine similarity, together with 95\% confidence intervals, for five informative comparison states \protect\footnotemark. Pennsylvania and New York provide natural geographic benchmarks, while three additional states are selected to represent low, median, and high pre-reform similarity to New Jersey. The resulting comparison spans markedly different initial positions without conditioning the selection on the estimated post-reform change.  
\footnotetext{
Notes: Confidence intervals widen with lower coefficient-vector norms and with greater angular distance between mechanisms — both channels through which the cosine metric's sensitivity to estimation error increases, as shown by its gradient, $\sin\phi/||\beta||$
(Corollary~\ref{cor:gradient}). Since 
$\phi$ increases as similarity to New Jersey falls, confidence intervals widen accordingly. Reported intervals use the full delta-method variance of the estimated change in cosine similarity.
}

\begin{figure}[!h]
\begin{center}
\includegraphics[width=0.9\textwidth]{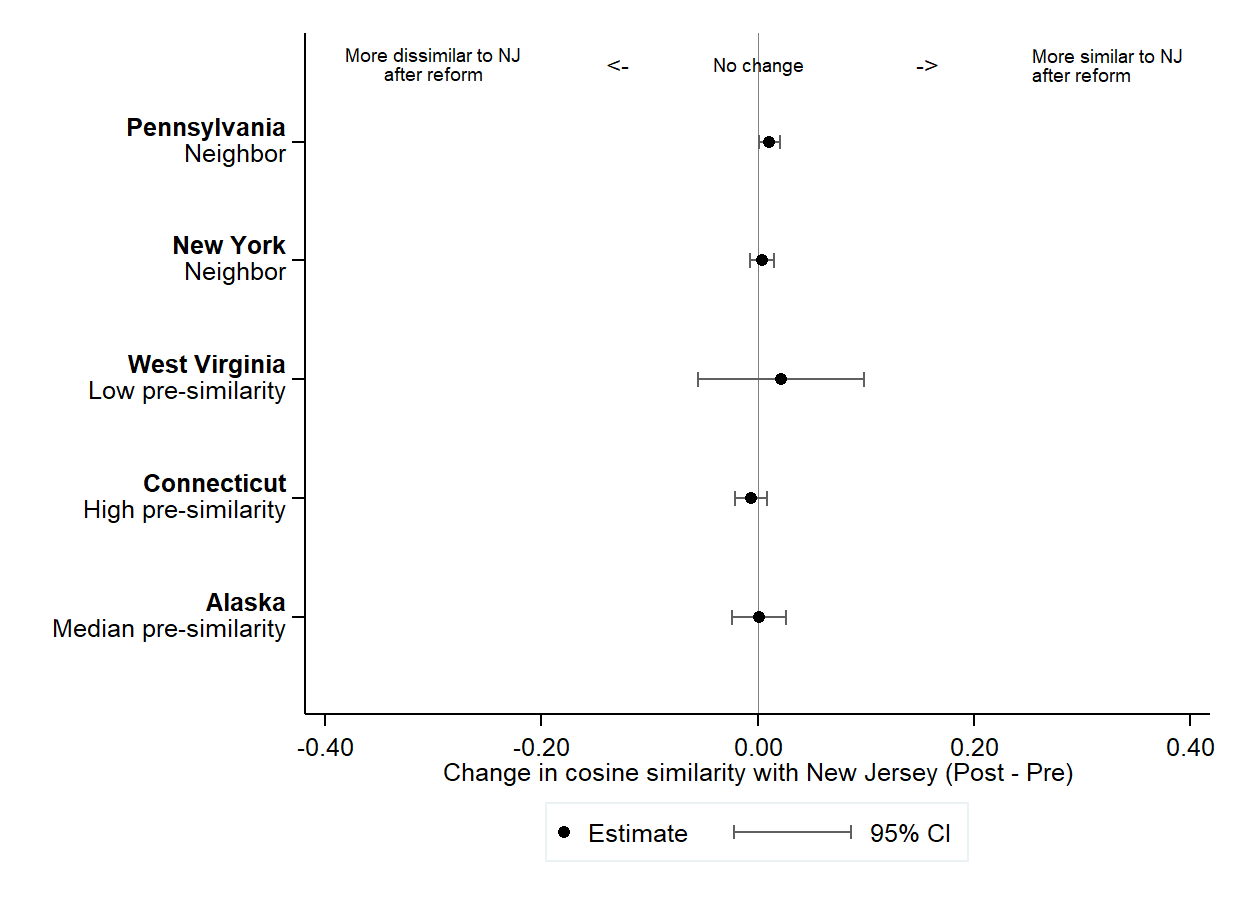}
\caption{Estimated Changes in Wage-Mechanism Similarity to New Jersey}
\label{fig:CIs}
\end{center}
\footnotesize{\textit{Note:}  Confidence intervals are constructed from the joint delta-method variance of the pre- and post-reform cosine estimates, allowing for dependence along both the individual and year–month dimensions.}
\end{figure}

The estimates point to a notable stability in New Jersey's relative position despite the substantial change in its own wage mechanism documented above. Similarity with the neighbouring states and with the state most similar to New Jersey before the reform changes little across the two periods; the median- and low-similarity benchmarks likewise remain centred near zero, showing no common movement toward or away from New Jersey. The reform, in other words, reconfigured New Jersey's wage-determination structure internally without reordering its position among other states' wage mechanisms.

\subsection{Beyond Common Trends: A Reform-Specific Effect in New Jersey's Wage Mechanism}
\label{subsec:beyond_common_trends}
The preceding analysis establishes that New Jersey experienced a statistically significant change in its wage-mechanism coefficient vector following the reform. This pre--post comparison, however, does not distinguish a reform-specific change from broader structural changes affecting wage determination across states over the same period. To separate these two components, we complement the analysis with a synthetic-control comparison following the approach of \citet{Wied2026}. The key insight of \citet{Wied2026} is that, under a parallel-trends condition in the parameter space, a weighted combination of parameter changes in untreated units can be used to recover the counterfactual evolution of the treated unit's parameters in the absence of treatment. We exploit this insight to construct a synthetic counterfactual for New Jersey's wage-mechanism coefficient vector from those estimated for untreated states. Specifically, the synthetic counterfactual is constructed as a weighted combination of the untreated states' post-treatment coefficient vectors, using the weight estimates from \citet{Wied2026}. These weights are chosen to provide the best approximation of New Jersey's pre-treatment coefficient path by the corresponding pre-treatment coefficient paths of the untreated states in the distribution-regression model. Since \citet{Wied2026} conducts his empirical analysis using the same dataset and model specification as in our application, this provides a basis for applying the estimated weights in our setting. As a simple benchmark, we also construct the counterfactual coefficient vector by assigning equal weights to all donor states. Unlike the Wied-based synthetic counterfactual, this specification does not exploit pre-treatment fit and is therefore not intended as an alternative estimator, but provides a useful reference for assessing the role of the estimated synthetic-control weights. We then compare New Jersey's observed post-reform coefficient vector with its synthetic counterpart using the cosine similarity measure. This comparison allows us to assess whether the post-reform reconfiguration of New Jersey's wage mechanism exceeds the contemporaneous structural changes captured by the untreated states.

\begin{table}[htbp]
\centering
\caption{Structural Similarity of New Jersey and Synthetic New Jersey}
\label{tab:cosine_counterfactual}
\begin{tabular}{lccc}
\toprule

    & Cosine similarity 
    & Standard error 
    & 95\% CI \\
\midrule
Synthetic counterfactual
    & 0.996 
    & 0.005
    & [0.986, 1.000] \\

Naive equal-weight benchmark
    & 0.984 
    & 0.004 
    & [0.976, 0.993] \\
\bottomrule
\end{tabular}

\vskip 0.2cm
\begin{minipage}{\textwidth}
\footnotesize
\textit{Notes:} The synthetic counterfactual uses weight estimates from
\citet{Wied2026}. Standard errors are obtained using the delta method. States are assumed to be independent to keep estimation tractable, as allowing cross-state dependence increases the dimension of the covariance problem by a factor of S, where S is the number of states.
\end{minipage}
\end{table}

The reported results (Table~\ref{tab:cosine_counterfactual}) show a cosine similarity of 0.996 between New Jersey's post-treatment wage mechanism and its synthetic counterfactual, and the null hypothesis of perfect directional similarity, $H_0:\theta=1$, is not rejected, in contrast tom the case of the equal-weight benchmark. This result warrants caution, however: as cosine similarity approaches unity, the metric's gradient vanishes, the delta-method variance degenerates, and inference based directly on $\theta$ becomes unreliable near this boundary. We therefore complement it with a Wald-type test of the restriction $\boldsymbol\beta_{NJ}=\lambda\boldsymbol\beta_{CF}$, $\lambda>0$, applied directly in the coefficient space \protect\footnotemark. For the synthetic counterfactual, this restriction holds ($\chi^2(2)=3.58, p=0.167$), consistent with the direct cosine-based test: a synthetic counterfactual matched to New Jersey's pre-reform trajectory shows no detectable deviation from its post-treatment mechanism.

\footnotetext{
The proportionality coefficient $\lambda$ is estimated as the coefficient of the orthogonal projection of $\boldsymbol{\beta}_{\mathrm{NJ}}$ onto the one-dimensional subspace spanned by $\boldsymbol{\beta}_{\mathrm{CF}}$ by solving $\hat{\lambda} = \arg\min_{\lambda} \left\| \boldsymbol{\beta}_{\mathrm{NJ}} - \lambda\boldsymbol{\beta}_{\mathrm{CF}} \right\|^2$.
The covariance matrix estimator for the restriction $\boldsymbol{\beta}_{\mathrm{NJ}}=\lambda\boldsymbol{\beta}_{\mathrm{CF}}$
 is generically full rank in finite samples, yet under the null hypothesis it collapses to an asymptotic covariance matrix whose rank is one less than the dimension of the coefficient vector. As a result, the conventional Wald statistic does not have its standard chi-square limiting distribution, leading to size distortions in hypothesis testing \citep{DufourEtAl2025}. To satisfy the rank condition of \citet{Andrews1987}, under which a Wald-type statistic based on the Moore–Penrose generalized inverse of the covariance matrix estimator is asymptotically $\chi^2$
-distributed provided the estimator's rank converges to the asymptotic rank, we construct a null-imposed covariance matrix estimator whose rank matches the asymptotic rank of the covariance matrix under the null. The resulting modified Wald statistic is asymptotically distributed as $\chi^2(3)$.
}

This conclusion is strengthened by a further check: robustness to the inclusion of the intercept. The exclusion of the constant from our primary similarity measure was motivated by the premise that a mechanism's level and its structure are conceptually distinct margins of adjustment that need not move together, and indeed did not in the within-state comparison reported earlier. Here, by contrast, the two specifications agree. Including the constant, the cosine similarity between New Jersey's observed and synthetic counterfactual coefficient vectors is 0.973, with the corresponding proportionality restriction not rejected ($\chi^2(3)=5.41$, $p=0.144$); excluding it, the cosine similarity rises to 0.996, again with no evidence against proportionality. Since the constant, under our standardised specification, represents the log wage of an individual with average schooling and experience, this agreement indicates that neither the average wage level nor the relative importance of worker characteristics differs detectably between New Jersey's post-treatment mechanism and its properly weighted counterfactual. Taken together, these results recast the change documented in the within-state comparison above: much of the structural change in New Jersey`s wage mechanism appears to reflect trends shared with other states over the same period, rather than an adjustment specific to the reform itself.

A large literature documents that minimum-wage increases affect the low end of the wage distribution \citep{CardKrueger1994,Dubeetal2010,Cengizetal2019}, with more recent evidence pointing to heterogeneous effects concentrated among young, low-educated workers \citep{Wied2026}. Our results extend this picture along a different dimension: Despite the reform’s clear effects on wages, we find no detectable reform-specific change in the underlying mechanism linking worker characteristics to wages.

\section{Discussion}
\label{sec:discussion}

This paper studies structural similarity between economic mechanisms, defined in terms of their responses to common shocks. Applying the cosine metric to this problem requires translating a purely mathematical concept into an economically meaningful measure. Although the cosine metric is mathematically well-defined for any pair of coefficient vectors, its economic interpretation is valid only when the vectors represent comparable economic mechanisms. In this context, comparability refers to mechanisms describing the decisions of the same type of economic agents, operating at the same level of aggregation, and represented within a common modelling framework.

In practice, a lack of comparability between economic mechanisms does not necessarily preclude structural similarity analysis. Consider two economies with fundamentally different supply-side structures. In the first, production decisions are made primarily by decentralised firms operating in competitive or oligopolistic markets. In the second, production is heavily influenced by government intervention through production quotas, price controls, or direct ownership of firms. Although both settings involve a supply-side mechanism, the estimated coefficients summarise fundamentally different behavioural processes and therefore cannot be interpreted as describing the same economic mechanism.

Nevertheless, the lack of comparability may concern only part of the overall economic system. While the supply side differs substantially across the two economies, consumer demand may still be governed by similar behavioural principles. Structural similarity can then be evaluated separately for the demand equations, providing a meaningful comparison of the comparable component of the economic system. Extending the proposed framework to compare individual blocks of larger structural models represents a promising direction for future research.

A different challenge arises when the chosen econometric model fails to adequately represent the underlying economic mechanism. Since the cosine measure compares estimated response coefficient vectors rather than the mechanisms themselves, its validity depends on whether the estimated model captures the essential structural relationships. This issue is particularly relevant when inherently nonlinear mechanisms are approximated by linear specifications.

Consider, for example, a production technology exhibiting threshold effects, multiple production regimes, or strong complementarities between capital and labour. While a linear model may provide a reasonable approximation over a limited range of observations, it may fail to capture the underlying mechanism more generally. Similarly, inflation dynamics may respond approximately linearly to small supply shocks but exhibit pronounced nonlinearities following large disturbances. In such cases, differences between the estimated coefficient vectors may reflect model misspecification rather than genuine structural differences.

The proposed framework is therefore intended to compare response structures represented within a common and adequately specified econometric model. Extending the methodology to nonlinear specifications and more flexible structural representations constitutes a natural direction for future research.

Taken together, these considerations highlight that structural similarity is a multifaceted economic concept rather than a purely mathematical one. Although the cosine metric provides a rigorous statistical measure of similarity between response coefficient vectors, its economic interpretation depends on the nature of the mechanisms being compared, the validity of their econometric representation, and the context of the empirical application. Recognising these dimensions not only clarifies the scope of the proposed methodology but also identifies several promising directions for its further development.

\section{Conclusion}

This paper develops the concept of structural similarity analysis, which shifts the object of comparison from observed economic outcomes to the mechanisms that generate them. Rather than asking whether countries or regions are similar in terms of macroeconomic indicators, the proposed approach asks whether the relationships governing the decisions and outcomes of economic agents are themselves similar. This perspective broadens the scope of comparative analysis and provides a potential basis for assessing the external validity of policy evidence: economies governed by similar mechanisms may provide more informative benchmarks for anticipating responses to common shocks or policy interventions.

We operationalise this idea for economic mechanisms represented by linear models. We use cosine similarity to measure the proximity of coefficient vectors, propose an estimator of structural similarity, and develop inference for the resulting similarity measure. The comparison is meaningful for mechanisms that are comparable in economic content—that is, mechanisms describing the behavior of the same type of economic agents, at the same level of aggregation, and within a common modelling framework. Within this class, the approach provides a parsimonious way to quantify both the stability of a mechanism over time and its similarity across economic environments.

The empirical application to the 1992 New Jersey minimum-wage reform illustrates what this perspective can reveal beyond conventional treatment-effect analysis. Three features stand out. First, New Jersey experienced a comparatively large change in its estimated wage mechanism relative to contemporaneous changes in other states. Second, this within-state change was accompanied by a broadly convergent shift in New Jersey's wage mechanism toward other states'. Third, and most tellingly, this apparent structural change largely dissolves once benchmarked against a synthetic counterfactual constructed to match New Jersey's own pre-reform trajectory: New Jersey's post-reform mechanism proves statistically indistinguishable from what that trajectory would have predicted in the absence of the reform, whether or not the level of wages is included in the comparison.

The cosine metric proposed in this paper is not without limitations, both mathematical and economic. Its precision varies across its range: inference becomes unreliable for nearly identical mechanisms, and grows less sharp as mechanisms diverge. Yet these limitations matter less than they might appear, once the metric is considered against the applications it is designed for. Economic mechanisms are rarely exactly identical, and formally measuring the similarity of mechanisms already known to be fundamentally different offers little practical value. The economically interesting cases lie between these two extremes — in the judgment of whether two economies, markets, or groups respond to a shock similarly enough for one's experience to inform expectations about the other. It is precisely in this middle range that the metric is best suited, and precisely here that it performs best. Structural similarity analysis is not a substitute for a comprehensive economic assessment; it is a diagnostic tool that precedes one, helping to identify where deeper analysis is warranted and where lessons may plausibly travel across settings — a role relevant to questions ranging from investment decisions to policy transfer. Ultimately, the relevant question is not only whether two economies look alike, but whether they work alike, and whether they continue to do so when struck by the same shock — the question structural similarity analysis is proposed to answer.

\bibliographystyle{apalike}

\bibliography{bibliography}

@article{BonhommeManresa2015,
  author  = {Bonhomme, S. and Manresa, E.},
  title   = {Grouped Patterns of Heterogeneity in Panel Data},
  journal = {Econometrica},
  volume  = {83},
  number  = {3},
  pages   = {1147--1184},
  year    = {2015},
  note    = {doi: 10.3982/ECTA11319}
}

@article{Sims1980,
  author  = {Sims, C.},
  title   = {Macroeconomics and Reality},
  journal = {Econometrica},
  volume  = {48},
  number  = {1},
  pages   = {1--48},
  year    = {1980},
  note    = {doi: 10.2307/1912017}
}

@article{Keetal2015,
  author  = {Ke, T. and Fan, J. and Wu, Y.},
  title   = {Homogeneity Pursuit},
  journal = {Journal of the American Statistical Association},
  volume  = {110},
  number  = {509},
  pages   = {175--194},
  year    = {2015},
  note    = {doi: 10.1080/01621459.2014.892882}
}

@article{PattonWeller2023,
  author  = {Patton, A. J. and Weller, B. M.},
  title   = {Testing for Unobserved Heterogeneity via k-Means Clustering},
  journal = {Journal of Business \& Economic Statistics},
  volume  = {41},
  number  = {3},
  pages   = {737--751},
  year    = {2023},
  note    = {doi: 10.1080/07350015.2022.2061983}
}

@article{PesaranYamagata2008,
  author  = {Pesaran, M. H. and Yamagata, T.},
  title   = {Testing Slope Homogeneity in Large Panels},
  journal = {Journal of Econometrics},
  volume  = {142},
  number  = {1},
  pages   = {50--93},
  year    = {2008},
  note    = {doi: 10.1016/j.jeconom.2007.05.010}
}

@techreport{Heckmanetal2003,
  author      = {Heckman, James J. and Lochner, Lance J. and Todd, Petra E.},
  title       = {Fifty Years of {Mincer} Earnings Regressions},
  institution = {National Bureau of Economic Research},
  type        = {NBER Working Paper},
  number      = {9732},
  year        = {2003},
  month       = {May},
  note        = {doi: 10.3386/w9732}
}

@incollection{Lemieux2006,
  author    = {Lemieux, Thomas},
  title     = {The {Mincer Equation} Thirty Years After {Schooling, Experience, and Earnings}},
  booktitle = {Jacob Mincer: A Pioneer of Modern Labor Economics},
  editor    = {Grossbard, Shoshana},
  publisher = {Springer},
  address   = {New York},
  year      = {2006},
  pages     = {127--145},
  note      = {doi: 10.1007/0-387-29175-X\_11}
}

@book{Mincer1974,
  author    = {Mincer, Jacob A.},
  title     = {Schooling, Experience, and Earnings},
  year      = {1974},
  publisher = {National Bureau of Economic Research},
  address   = {New York},
  isbn      = {0-870-14265-8}
}

@article{CardKrueger1994,
  author  = {Card, David and Krueger, Alan B.},
  title   = {Minimum Wages and Employment: A Case Study of the Fast-Food Industry in {New Jersey} and {Pennsylvania}},
  journal = {American Economic Review},
  year    = {1994},
  volume  = {84},
  number  = {4},
  pages   = {772--793},
  month   = {September}
}

@article{Dubeetal2010,
  author  = {Dube, Arindrajit and Lester, T. William and Reich, Michael},
  title   = {Minimum Wage Effects Across State Borders: Estimates Using Contiguous Counties},
  journal = {The Review of Economics and Statistics},
  year    = {2010},
  volume  = {92},
  number  = {4},
  pages   = {945--964},
  note    = {doi: 10.1162/REST\_a\_00039}
}

@article{Cengizetal2019,
  author  = {Cengiz, Doruk and Dube, Arindrajit and Lindner, Attila and Zipperer, Ben},
  title   = {The Effect of Minimum Wages on Low-Wage Jobs},
  journal = {The Quarterly Journal of Economics},
  year    = {2019},
  volume  = {134},
  number  = {3},
  pages   = {1405--1454},
  note    = {doi: 10.1093/qje/qjz014}
}

@book{Becker1964,
  author    = {Becker, Gary S.},
  title     = {Human Capital: A Theoretical and Empirical Analysis, with Special Reference to Education},
  year      = {1964},
  publisher = {National Bureau of Economic Research},
  address   = {New York}
}

@article{Wied2026,
  author  = {Wied, Dominik},
  title   = {A Synthetic Control Approach to Conditional Distributional Treatment Effects},
  year    = {2026},
  journal = {arXiv preprint arXiv:2606.09625},
  note    = {doi: 10.48550/arXiv.2606.09625}
}

@techreport{Fisheretal2015,
  author      = {Fisher, Eric O'N. and Gilbert, John and Marshall, Kathryn G. and Oladi, Reza},
  title       = {A New Measure of Economic Distance},
  institution = {CESifo},
  type        = {CESifo Working Paper},
  number      = {5362},
  year        = {2015},
  address     = {Munich}
}

@article{Andrews1987,
  author  = {Andrews, Donald W. K.},
  title   = {Asymptotic Results for Generalized {Wald} Tests},
  journal = {Econometric Theory},
  year    = {1987},
  volume  = {3},
  number  = {3},
  pages   = {348--358},
  doi     = {10.1017/S0266466600010434}
}

@article{DufourEtAl2025,
  author  = {Dufour, Jean-Marie and Renault, Eric and Zinde-Walsh, Victoria},
  title   = {{Wald} Tests When Restrictions Are Locally Singular},
  journal = {The Annals of Statistics},
  year    = {2025},
  volume  = {53},
  number  = {2},
  pages   = {457--476},
  doi     = {10.1214/24-AOS2398}
}

\appendix

\section{Delta-Method Inference for Cosine Similarity}
\label{app:cosine_inference}

Consider two economic mechanisms, represented within a common linear modelling framework by the coefficient vectors $\boldsymbol{\beta_1}$
 and $\boldsymbol{\beta_2}$, respectively, whose structural similarity is measured using the cosine metric. Let
$\widehat{\boldsymbol{\beta}}_1$ and
$\widehat{\boldsymbol{\beta}}_2$ denote two $k\times 1$ estimators of
$\boldsymbol{\beta}_1$ and $\boldsymbol{\beta}_2$, respectively, and suppose
that
\begin{equation*}
\sqrt{n}
\begin{pmatrix}
\widehat{\boldsymbol{\beta}}_1-\boldsymbol{\beta}_1 \\
\widehat{\boldsymbol{\beta}}_2-\boldsymbol{\beta}_2
\end{pmatrix}
\xrightarrow{d}
\mathcal{N}
\left(
\boldsymbol{0},
\boldsymbol{\Sigma}
\right),
\qquad
\boldsymbol{\Sigma}
=
\begin{pmatrix}
\boldsymbol{\Sigma}_{11} & \boldsymbol{\Sigma}_{12} \\
\boldsymbol{\Sigma}_{21} & \boldsymbol{\Sigma}_{22}
\end{pmatrix}.
\label{eq:app_joint_an}
\end{equation*} \\
The cosine similarity between the two coefficient vectors is
\begin{equation*}
g(\boldsymbol{\beta}_1,\boldsymbol{\beta}_2)
=
\frac{
\boldsymbol{\beta}_1^{\prime}\boldsymbol{\beta}_2
}{
\|\boldsymbol{\beta}_1\|\,\|\boldsymbol{\beta}_2\|
},
\label{eq:app_cosine}
\end{equation*}
with estimator
\begin{equation*}
\widehat g
=
\frac{
\widehat{\boldsymbol{\beta}}_1^{\prime}
\widehat{\boldsymbol{\beta}}_2
}{
\|\widehat{\boldsymbol{\beta}}_1\|\,
\|\widehat{\boldsymbol{\beta}}_2\|
}.
\label{eq:app_cosine_est}
\end{equation*}\\
Let
\begin{equation*}
a=\|\boldsymbol{\beta}_1\|,
\qquad
b=\|\boldsymbol{\beta}_2\|,
\qquad
\boldsymbol{u}_1=\frac{\boldsymbol{\beta}_1}{a},
\qquad
\boldsymbol{u}_2=\frac{\boldsymbol{\beta}_2}{b},
\end{equation*}
so that
\[
g=\boldsymbol{u}_1^{\prime}\boldsymbol{u}_2.
\]\\
Provided that
$\boldsymbol{\beta}_1\neq\boldsymbol{0}$ and
$\boldsymbol{\beta}_2\neq\boldsymbol{0}$,
the cosine similarity is continuously differentiable. Its gradients are
\begin{align*}
\frac{\partial g}{\partial\boldsymbol{\beta}_1}
&=
\frac{\boldsymbol{\beta}_2}{ab}
-
g\frac{\boldsymbol{\beta}_1}{a^2}
=
\frac{1}{a}
\left(
\boldsymbol{u}_2-g\boldsymbol{u}_1
\right),
\label{eq:app_grad1}
\\
\frac{\partial g}{\partial\boldsymbol{\beta}_2}
&=
\frac{\boldsymbol{\beta}_1}{ab}
-
g\frac{\boldsymbol{\beta}_2}{b^2}
=
\frac{1}{b}
\left(
\boldsymbol{u}_1-g\boldsymbol{u}_2
\right).
\end{align*}
Define the stacked gradient
\begin{equation*}
\nabla g
=
\begin{pmatrix}
\dfrac{\partial g}{\partial\boldsymbol{\beta}_1}
\\[0.8em]
\dfrac{\partial g}{\partial\boldsymbol{\beta}_2}
\end{pmatrix}.
\label{eq:app_gradient}
\end{equation*}\\
The multivariate delta method then gives
\begin{equation*}
\sqrt{n}
\left[
g(\widehat{\boldsymbol{\beta}}_1,
  \widehat{\boldsymbol{\beta}}_2)
-
g(\boldsymbol{\beta}_1,\boldsymbol{\beta}_2)
\right]
\xrightarrow{d}
\mathcal{N}(0,V_g),
\label{eq:app_delta}
\end{equation*}
where
\begin{equation*}
V_g
=
\nabla g^{\prime}
\boldsymbol{\Sigma}
\nabla g.
\label{eq:app_var_compact}
\end{equation*}\\
Substituting the gradients yields
\begin{align*}
V_g
={}&
\frac{1}{a^2}
(\boldsymbol{u}_2-g\boldsymbol{u}_1)^{\prime}
\boldsymbol{\Sigma}_{11}
(\boldsymbol{u}_2-g\boldsymbol{u}_1)
\nonumber\\
&+
\frac{2}{ab}
(\boldsymbol{u}_2-g\boldsymbol{u}_1)^{\prime}
\boldsymbol{\Sigma}_{12}
(\boldsymbol{u}_1-g\boldsymbol{u}_2)
\nonumber\\
&+
\frac{1}{b^2}
(\boldsymbol{u}_1-g\boldsymbol{u}_2)^{\prime}
\boldsymbol{\Sigma}_{22}
(\boldsymbol{u}_1-g\boldsymbol{u}_2).
\label{eq:app_var_expanded}
\end{align*}\\
A consistent estimator of the variance is obtained by replacing the
population quantities with their sample counterparts:
\begin{equation*}
\widehat{\operatorname{Var}}(\widehat g)
=
\widehat{\nabla g}^{\,\prime}
\widehat{\operatorname{Var}}
\begin{pmatrix}
\widehat{\boldsymbol{\beta}}_1 \\
\widehat{\boldsymbol{\beta}}_2
\end{pmatrix}
\widehat{\nabla g}.
\label{eq:app_var_est}
\end{equation*}
Hence,
\begin{equation*}
\widehat{\operatorname{se}}(\widehat g)
=
\sqrt{
\widehat{\operatorname{Var}}(\widehat g)
},
\end{equation*}
and an asymptotic $(1-\alpha)$ confidence interval is
\begin{equation*}
\left[
\widehat g
-
z_{1-\alpha/2}
\widehat{\operatorname{se}}(\widehat g),
\;
\widehat g
+
z_{1-\alpha/2}
\widehat{\operatorname{se}}(\widehat g)
\right].
\label{eq:app_ci}
\end{equation*}

\section{Monte Carlo tables}

\begin{table}[H]
\centering
\caption{Finite-Sample Performance of the Cosine Similarity Estimator}
\label{tab:mc_results}

\begin{tabular}{cccccccc}
\hline\hline
$N$
& $\theta_0$
& Mean $\hat{\theta}$
& Bias
& RMSE
& MC SD
& Mean SE
& Coverage \\
\hline

50   & 0.00 &  0.0028 &  0.0028 & 0.2004 & 0.2003 & 0.2040 & 0.9371 \\
200  & 0.00 & -0.0016 & -0.0016 & 0.1008 & 0.1008 & 0.1005 & 0.9444 \\
500  & 0.00 &  0.0007 &  0.0007 & 0.0631 & 0.0631 & 0.0634 & 0.9494 \\
1000 & 0.00 & -0.0005 & -0.0005 & 0.0456 & 0.0456 & 0.0447 & 0.9478 \\

\addlinespace

50   & 0.50 & 0.4792 & -0.0208 & 0.1795 & 0.1783 & 0.1781 & 0.9321 \\
200  & 0.50 & 0.4951 & -0.0049 & 0.0865 & 0.0863 & 0.0871 & 0.9482 \\
500  & 0.50 & 0.4994 & -0.0006 & 0.0550 & 0.0550 & 0.0549 & 0.9467 \\
1000 & 0.50 & 0.4989 & -0.0011 & 0.0390 & 0.0390 & 0.0388 & 0.9491 \\

\addlinespace

50   & 0.90 & 0.8608 & -0.0392 & 0.1065 & 0.0990 & 0.0980 & 0.9289 \\
200  & 0.90 & 0.8908 & -0.0092 & 0.0456 & 0.0446 & 0.0448 & 0.9420 \\
500  & 0.90 & 0.8963 & -0.0037 & 0.0281 & 0.0278 & 0.0279 & 0.9474 \\
1000 & 0.90 & 0.8981 & -0.0019 & 0.0194 & 0.0193 & 0.0196 & 0.9504 \\

\addlinespace

50   & 0.99 & 0.9473 & -0.0427 & 0.0674 & 0.0522 & 0.0596 & 0.9910 \\
200  & 0.99 & 0.9801 & -0.0099 & 0.0198 & 0.0172 & 0.0182 & 0.9566 \\
500  & 0.99 & 0.9860 & -0.0040 & 0.0107 & 0.0099 & 0.0099 & 0.9333 \\
1000 & 0.99 & 0.9881 & -0.0019 & 0.0069 & 0.0066 & 0.0066 & 0.9385 \\

\hline\hline
\end{tabular}

\vspace{0.5em}

\begin{minipage}{0.94\textwidth}
\footnotesize
\textit{Notes:} The table reports Monte Carlo results based on
$10{,}000$ replications. $\theta_0$ denotes the true cosine similarity
between the two population coefficient vectors. Mean $\hat{\theta}$
is the average estimated cosine similarity across Monte Carlo
replications. Bias is defined as
$\mathbb{E}[\hat{\theta}]-\theta_0$, and RMSE as
$\sqrt{\mathbb{E}[(\hat{\theta}-\theta_0)^2]}$.
MC SD denotes the empirical standard deviation of $\hat{\theta}$
across Monte Carlo replications, while Mean SE is the average
standard error obtained from the proposed delta-method variance
estimator. Coverage reports the empirical coverage probability of
the nominal 95\% confidence interval.
\end{minipage}

\end{table}

\end{document}